\newtheorem{theorem}{Theorem}[section]
\newtheorem{definition}[theorem]{Definition}
\newtheorem{remark}[theorem]{Remark}
\newcommand{\ed}{\mathrm{d}}
\definecolor{darkgreen}{RGB}{0,127,0}
\renewcommand{\P}{{\mathbb P}}
\newcommand{\Q}{{\mathbb Q}}
\newcommand{\E}{{\mathbb E}}
\newcommand{\R}{{\mathbb R}}
\newcommand{\ES}{\text{ES}}
\newcommand{\bs}{\boldsymbol}
\DeclareMathOperator{\ess}{ess}
\DeclareMathOperator{\GARCH}{GARCH}
\title{The ineffectiveness of coherent risk measures}
\author{John Armstrong \\ Dept. of Mathematics \\ King's College London \\ \tt{john.armstrong@kcl.ac.uk} \and Damiano Brigo\thanks{Corresponding author}\\ Dept. of Mathematics \\ Imperial College London \\ \tt{d.brigo@ic.ac.uk}}
\begin{document}

\maketitle

\begin{abstract}
We show that coherent risk measures are ineffective in curbing 
	the behaviour of investors with limited liability or excessive tail-risk seeking behaviour if the market admits
	statistical arbitrage opportunities which we term $\rho$-arbitrage
	for a risk measure $\rho$. We show how to determine analytically whether such $\rho$-arbitrage portfolios
	exist in complete markets and in the Markowitz model. We also consider realistic
numerical examples of incomplete markets and determine whether expected shortfall constraints
are ineffective in these markets. We find that the answer depends heavily upon the probability
model selected by the risk manager but that it is certainly possible for expected shortfall
constraints to be ineffective in realistic markets. Since value at risk constraints are weaker
than expected shortfall constraints, our results can be applied to value at risk. By contrast, we show that reasonable expected utility constraints  are effective in any arbitrage-free market.
\end{abstract}

\bigskip

{\bf Keywords}: ineffective risk measures, $\rho$-arbitrage, limited liability, tail-risk seeking behaviour, coherent risk measures, positive homogeneity, s-shaped utility, classic utility risk limit, Markowitz model, incomplete markets.  

\medskip

\section*{Introduction}

In \cite{armstrongbrigoutilityjbf} it was shown that neither value-at-risk constraints nor expected-shortfall constraints are sufficient to curb the behaviour of a risk-seeking trader
or risk-seeking institution in typical complete markets.

In that paper a risk-seeking trader was modelled as an investor who wishes to optimize an S-shaped utility curve. This is an increasing utility curve that is convex on the left and concave on the right. An investor with such a utility curve is conventionally risk-averse in profitable situations, but risk-seeking in loss-making situations. This idea is motivated by the theory of \cite{kahnemanAndTversky} who observed such behaviour empirically. It can also be justified theoretically by observing that traders have limited liability as they can lose no more than their job and their reputation. Similarly shareholders in banks have limited liability.

\cite{armstrongbrigoutilityjbf} shows that subject to mild technical assumptions, in complete markets, including the case of the Black-Scholes model, a trader with S-shaped utility operating under cost and expected shortfall constraints can achieve any desired expected utility, bounded only by the supremum of their utility function. Moreover it is shown that the trader will take a position with infinitely bad utility if that utility were to be measured with a conventional concave utility function. We will refer to such a utility function as the risk-manager's utility function. This suggests that, in this context, expected conventional utility could be a more effective risk measure for the risk manager than expected shortfall and VaR.

The most significant assumption of \cite{armstrongbrigoutilityjbf} is that the market is complete. This paper seeks to ask how the results change if one studies incomplete markets or other coherent risk measures (as introduced by \cite{artznerEtAl}).

We begin by identifying why expected shortfall is ineffective in Section \ref{section:theory}. We give a formal definition of what we mean for a risk measure to be ineffective in terms
of whether it successfully reduces the utility that can be achieved by the ``worst-case'' trader, namely a risk-neutral trader with limited liability. We show that
coherent risk measures, $\rho$, are ineffective if and only if
the market contains a specific type of portfolio associated with $\rho$ which we call a $\rho$-arbitrage portfolio. These are portfolios which give a potentially positive return for a non-positive price without incurring a positive risk as measured using $\rho$. We will see that if the market
admits a $\rho$-arbitrage, then the risk constraint will be ineffective against traders with a broad range of S-shaped utility functions, not just the worst-case example. Our results in this section apply to rather general markets, our key assumption is that the market is positive-homogeneous, implying that unlimited quantities of assets can be purchased at a given price.
Our analysis shows that the key fault of expected shortfall is that it is
a coherent risk measure, in particular it is positive-homogeneous. We conclude that if one wants to use risk-measures that are effective one should consider convex measures, as introduced by \cite{follmer1}.

After Section \ref{section:theory} the paper focuses on expected shortfall. We will write $\ES_p$ for the expected shortfall at confidence level $p$. Our theory tells us that to determine if $\ES_p$ is effective we must look for $\ES_p$-arbitarge portfolios. 

In Section \ref{section:analytic} we show how this characterization of ineffectiveness allows
us to compute analytically whether or not expected shortfall is effective at a given confidence level in certain simple markets. We first consider the case of a market in assets which follow a multi-variate normal distribution, as considered by \cite{markowitz}. We next consider the case of complete markets. Our results show that $\ES_p$ arbitrage is unlikely in the Markowitz model for low values of $p$ but inevitable for all $p$ in the Black--Scholes model.

In Section \ref{section:numerics} we show how our characterization of ineffectiveness can be
used in practice for realistic markets. Using the techniques of \cite{rockafellarUrayasev},
we are able to give a practical numerical method for determining if $\ES_p$ is effective.
We demonstrate this technique in practice by considering the market of European options 
with a fixed maturity on the S\&P 500.
We find that the values for which $\ES_p$ is effective depend heavily upon the probability model chosen. We find that for some ostensibly reasonable fat-tailed probability models calibrated to market data, $\ES_p$ is ineffective for even very low values of $p$. In particular we found this for a $\GARCH(1,1)$ model calibrated to historic data and a mixture model calibrated to option prices. Thus the ineffectiveness of $\ES_p$ constraints
found in \cite{armstrongbrigoutilityjbf} cannot be put down simply to the use of an idealised market model.

The paper ends with an appendix which collects together the proofs.

We finally note that we are not the only authors to have identified that positive-homogeneity as a problematic property for a risk-measure.  \cite{herdegen} have independently reached a similar conclusion by studying a concept which they term regulatory arbitrage, which is closely related to $\rho$-arbitrage.
They define regulatory arbitrage as occuring when the problem of maximising expected return subject to $\rho$ constraints becomes ill-posed.
The possibility of such ill-posed problems was first noticed by \cite{alexanderBaptista} in the case of VaR. See
\cite{herdegen} for a review of the subsequent literature. While the existing literature focuses on the problem of maximising an expected return, and hence on the behaviour of risk-neutral agents, we consider the further dangers posed by agents which are tail-risk-seeking. In addition we study the effectiveness of expected utility constraints against such agents, thereby demonstrating that the strategies taken by tail-risk-seekers under positively homogeneous constraints will yield arbitrarily low utilities for the risk-manager, who would find such trades unacceptable.

\section{Ineffective constraints and $\rho$-arbitrage}
\label{section:theory}

We will begin by giving general definitions of a financial market. We wish to give definitions that are broad enough to include incomplete markets where there may be a bid-ask spread or even an order book. Our treatment is based on that of \cite{pennanen2011,pennanen2012}.

A {\em market} consists of a probability space $(\Omega, {\cal F}, \P)$ and a function 
\[{\cal P}:L^0(\Omega,\R)\to{\R}\cup \{+\infty\}.
\]
Each random variable represents the payoff of an asset and ${\cal P}$ computes the price of an asset. Assets with price $\infty$ cannot be purchased. Because we have included $\infty$ in the range of ${\cal P}$, we can safely assume that ${\cal P}$ is defined on the whole of $L^0(\Omega,\R)$ and not just some subset. (Mathematically one can extend the definition to allow
markets where ${\cal P}$ may take the value $-\infty$ on liabilities so bad that the market is willing to pay arbitrarily large sums to anyone willing to take on this liability. However, in this paper we restrict our attention to markets without such liabilities.)

\begin{definition}
A market is {\em positive-homogeneous} if ${\cal P}(\lambda X)=\lambda {\cal P}(X)$ for $\lambda \geq 0$. A market is {\em coherent} if it is positive-homogeneous and
\begin{equation}
{\cal P}(X+Y) \leq {\cal P}(X) + {\cal P}(Y)
\label{eq:subadditivity} 
\end{equation}
\begin{equation}
{\cal P}(1) < \infty, \quad {\cal P}(-1) < \infty.
\label{eq:riskFreeAsset}
\end{equation}
\end{definition}

 Note that by requiring only positive-homogeneity, one allows for a bid-ask spread. Assuming that a market is positive-homogeneous is an idealisation; for example it implies that there are no quantity constraints or price impact. However, once one has assumed positive-homogeneity, the assumption of sub-additivity \eqref{eq:subadditivity} is rather innocuous as one should be able to replicate the payoff $X+Y$ by purchasing the assets $X$ and $Y$ separately once one assumes there are no quantity constraints. Assuming positive-homogeneity, equation \eqref{eq:riskFreeAsset} is the assumption that there is a risk-free asset.

We use the term coherent simply by analogy with so-called coherent risk-measures. We do not wish to imply that there is anything logically incoherent about markets which are not coherent, the word is merely intended to convey the uniformity arising from positive-homogeneity.

A {\em trading constraint} ${\cal A}$ is a subset of the set of random variables representing the assets that a trader is allowed to purchase.

Let $\tilde{u}(x):=x^+$. This can  be thought of as the utility function of an risk-neutral investor with limited liability.

\begin{definition}[Ineffective Constraint]
	A trading constraint ${\cal A}$ is {\em ineffective} if for any cost $C \in \R$
	\[
	\sup_{X \in {\cal A},\, {\cal P}(X)\leq C}{ E(\tilde{u}(X)) } = \infty.
	\]
\end{definition}

Note that we include negative costs in this definition. So under ineffective constraints even heavily indebted traders with utility $\tilde{u}$ would be able to achieve arbitrarily large utilities from their investments while at the same time clearing their debts.

Since any conventional utility function or any of the S-Shaped utility functions studied
by Kahneman and Tversky can be bounded above by some affine transformation of $\tilde{u}(x)$,
 the utility function $\tilde{u}$ represents a worst-case scenario for the risk manager. Thus a trading-constraint
is ineffective if it is possible for a trader's expected utility to be unperturbed by
the constraint in this worst case scenario, but the constraint may still have an effect on less aggressively risk-seeking
traders.

\begin{definition}[$\rho$-arbitrage]
	If $\rho$ is a function on the space of random variables, then a random variable is called a ${\rho}$-arbitrage if ${\cal P}(X)\leq 0$, $\rho(X)\leq 0$ and $X$ has a positive probability of taking a positive value.
\end{definition}

We will define a true arbitrage to be a random variable $X$ which has a positive probability of being positive,
is almost surely non-negative and has a non-positive price. Although many
authors prefer to insist that an arbitrage has a price of zero, allowing negative prices is more natural from the point
of view of convex analysis.
 If $\rho^c$ assigns the value $c> 0$ to any random variable which takes negative values with positive probability, then a true arbitrage is equivalent to a $\rho^c$-arbitrage. This justifies the name $\rho$-arbitrage (we remark that a variance-arbitrage or a standard-deviation-arbitrage will also be a true arbitrage).

Functions $\rho$ on $L^\infty(\Omega;\R)$ that are intended to measure risk have been studied extensively,
notably by \cite{artznerEtAl}. They gave a set of axioms that $\rho$ must obey for it to be called
a coherent risk-measure. We adapt their definition slightly to match our conventions for the domain and range of $\rho$.
\begin{definition}
A {\em coherent risk measure} $\rho:L^0(\Omega;\R)\to \R \cup \{\infty \}$ with
$L^\infty(\Omega,\R) \subseteq \rho^{-1}(\R)$ satisfies
\begin{enumerate}[(i)]
    \item Normalization: $\rho(0)=0$
	\item Montonicity: $\rho(X)\geq \rho(Y)$ if $X \leq Y$ almost surely.
	\item Sub-additivity: $\rho(X_1 + X_2 ) \leq \rho(X_1) + \rho(X_2)$.
	\item Translation invariance: $\rho( X + a )=\rho(X) - a$ for $a \in \R$.
	\item Positive homogeneity: $\rho( \lambda X )=\lambda \rho(X)$ for $\lambda \in \R^+$.	
\end{enumerate}	
\end{definition}

We may now state our main theoretical results which connect the effectiveness of a risk-measure $\rho$
to the existence of $\rho$-arbitrage. Note that the axiom of positive homogeneity plays
a crucial role in their proofs, which can be found in Appendix \ref{appendix:proofs}.

\begin{theorem}
\label{thm:rhoArbitrageAndUtility} (Arbitrarily good trader utilities can be otbained if there is a $\rho$-arbitrage).
Let $\rho$ be a coherent risk-measure. If a
coherent market  contains a $\rho$-arbitrage $X$ then
for any random variable $Y$ of finite expectation
\begin{align}
\lim_{\lambda\to \infty} \E( \tilde{u}(Y+ \lambda X)) &= \infty \label{eq:goodForRogue} \\
{\cal P}(Y+ \lambda X) &\leq {\cal P}(Y) \label{eq:priceInequality} \\
\rho(Y+ \lambda X) &\leq \rho(Y). \label{eq:rhoInequality}
\end{align}
If in addition $E(X^-)<\infty$, then for utility functions of the form
\begin{equation}\label{eq:powerutility}
u(x):=\begin{cases}
C_1 x^{a_1} & x \geq 0 \\
-C_2 (-x)^{a_2} & x \leq 0
\end{cases}
\end{equation}
where $C_1>0$, $C_2 \geq 0$ and $0 < a_2 < a_1 \leq 1$  we have
\begin{equation}
\lim_{\lambda\to \infty} \E( u(Y+ \lambda X)) = \infty.
\label{eq:goodForRogue2}
\end{equation}
\end{theorem}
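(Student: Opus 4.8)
The plan is to dispatch the two inequalities first, since they follow at once from the structural axioms, and then to build a single ``good event'' on which $Y+\lambda X$ is bounded below by a quantity growing linearly in $\lambda$; this one event will drive both limits. For \eqref{eq:priceInequality} and \eqref{eq:rhoInequality} I would simply invoke coherence: ${\cal P}$ is subadditive and positive-homogeneous, so ${\cal P}(Y+\lambda X)\le {\cal P}(Y)+\lambda {\cal P}(X)\le {\cal P}(Y)$ using ${\cal P}(X)\le 0$ and $\lambda\ge 0$, and identically the subadditivity and positive-homogeneity axioms for the coherent risk measure give $\rho(Y+\lambda X)\le \rho(Y)+\lambda\rho(X)\le \rho(Y)$ from $\rho(X)\le 0$. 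No analysis is needed here.

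The heart of the argument is \eqref{eq:goodForRogue}. Because $X$ is a $\rho$-arbitrage, $\P(X>0)>0$, so by continuity from below there is $\epsilon>0$ with $\P(X>\epsilon)=:\delta_0>0$. Since $Y$ has finite expectation it is almost surely finite, so I can choose $M>0$ large enough that the event $B:=\{X>\epsilon\}\cap\{Y>-M\}$ satisfies $\P(B)\ge \delta_0/2=:\delta>0$. On $B$ we have $Y+\lambda X> \lambda\epsilon-M$, hence for $\lambda>M/\epsilon$ the payoff is positive there and $\tilde u(Y+\lambda X)=(Y+\lambda X)^+\ge \lambda\epsilon-M$ on $B$. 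Integrating over $B$ gives $\E(\tilde u(Y+\lambda X))\ge (\lambda\epsilon-M)\,\delta\to\infty$. Because $\tilde u\ge 0$, no integrability hypothesis on $X$ or $Y$ is required for this step, which is why the first half of the theorem needs no assumption beyond finite expectation of $Y$.

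For \eqref{eq:goodForRogue2} I would write $Z_\lambda:=Y+\lambda X$ and split $\E(u(Z_\lambda))=C_1\,\E((Z_\lambda^+)^{a_1})-C_2\,\E((Z_\lambda^-)^{a_2})$. The gain term is bounded below on the same event $B$ by $C_1(\lambda\epsilon-M)^{a_1}\delta$, which grows like $\lambda^{a_1}$. For the loss term I would use $(a+b)^+\le a^++b^+$ to get $Z_\lambda^-\le Y^-+\lambda X^-$, and then the subadditivity of $t\mapsto t^{a_2}$ (valid since $0<a_2\le 1$) to obtain $(Z_\lambda^-)^{a_2}\le (Y^-)^{a_2}+\lambda^{a_2}(X^-)^{a_2}$. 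Using the elementary bound $t^{a_2}\le 1+t$, finiteness of $\E(Y^-)$ and of $\E(X^-)$ makes both expectations finite, so the loss term is at most $C_2(K_1+K_2\lambda^{a_2})$ for constants $K_1,K_2\ge 0$, growing only like $\lambda^{a_2}$. As $a_1>a_2$, the gain dominates and the difference tends to $+\infty$.

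The main obstacle is controlling the loss contribution: this is exactly where the extra hypothesis $\E(X^-)<\infty$ is needed, since without it the subtracted term could grow unboundedly, or the expectation could even be an indeterminate $\infty-\infty$. Once that term is pinned down, everything reduces to the exponent comparison $a_1>a_2$, which is the precise analytic expression of the S-shape that lets gains outpace losses, together with the elementary construction of the positive-probability event $B$ on which the $\rho$-arbitrage payoff is uniformly bounded away from zero.
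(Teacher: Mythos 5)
Your proof is correct, but it reaches the conclusion by a genuinely more elementary route than the paper's. The paper never isolates a ``good event'': instead it exploits the exact positive homogeneity of the power utility to write $\E(u(\lambda X))=\lambda^{a_1}\E(u(X^+))+\lambda^{a_2}\E(u(-X^-))$, bounds the loss term via the tangent line to $u$ at $-1$ (concavity giving $-C_2 t^{a_2}\geq C_2(a_2-1)-C_2a_2 t$), and then absorbs $Y$ through a global Lipschitz-type inequality $u(x-y)\geq u(x)-\tilde C y-C_1-C_2$ built from an auxiliary piecewise function $\ell$; the limit \eqref{eq:goodForRogue} is then obtained as the special case $C_2=0$, $a_1=1$ of the same computation. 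You instead lower-bound the gain by restricting to the event $B=\{X>\epsilon\}\cap\{Y>-M\}$ of positive probability (which dispatches \eqref{eq:goodForRogue} in two lines with no integrability beyond $Y$ being a.s.\ finite), and control the loss by $(Y+\lambda X)^-\leq Y^-+\lambda X^-$ together with subadditivity of $t\mapsto t^{a_2}$ and $t^{a_2}\leq 1+t$. Both arguments reduce to the same exponent comparison $a_1>a_2$ and both correctly identify $\E(X^-)<\infty$ as the hypothesis that keeps the loss term from destroying the expectation; what the paper's version buys is the sharper asymptotic $\E(u(\lambda X+Y))\gtrsim\lambda^{a_1}\E(u(X^+))$ with the exact leading constant, while yours is shorter, avoids the auxiliary function $\ell$, and makes it transparent that \eqref{eq:goodForRogue} needs no hypothesis on $X^-$ at all. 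The treatment of \eqref{eq:priceInequality} and \eqref{eq:rhoInequality} is identical in both.
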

\begin{theorem}
\label{thm:rhoArbitrageCharacterisation} (Equivalence between existence of $\rho$-arbitrage and ineffectiveness). 	
In a market containing a $\rho$-arbitrage, the constraint
\[
{\cal A}^{\rho, \alpha}:=\{Y \mid \rho(Y)\leq \alpha\}
\]
is ineffective for all $\alpha$. Conversely if ${\cal A}^{\rho, \alpha}$
is ineffective then the market admits a $\rho$-arbitrage.
\end{theorem}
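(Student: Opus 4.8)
The statement is an equivalence, so the plan is to prove the two implications separately; both hinge on positive homogeneity of ${\cal P}$ and of $\rho$, on the translation invariance and normalization of $\rho$, and on the presence of a risk-free asset.

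For the implication ``$\rho$-arbitrage $\Rightarrow$ ineffective'' I would reduce everything to Theorem~\ref{thm:rhoArbitrageAndUtility}. Fix $\alpha$ and a cost $C$. It suffices to produce one seed $Y_0$ of finite expectation with $\rho(Y_0)\le\alpha$ and ${\cal P}(Y_0)\le C$, since Theorem~\ref{thm:rhoArbitrageAndUtility} then guarantees that $Y_0+\lambda X$ remains in ${\cal A}^{\rho,\alpha}$ by \eqref{eq:rhoInequality}, keeps cost at most $C$ by \eqref{eq:priceInequality}, and has $E(\tilde{u}(Y_0+\lambda X))\to\infty$ by \eqref{eq:goodForRogue}, so the supremum defining ineffectiveness is $\infty$. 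The natural seed is the constant $Y_0=-\alpha$, for which $\rho(-\alpha)=\alpha$ by translation invariance and normalization. When ${\cal P}(X)<0$ this single choice settles every $C$ simultaneously: the family $-\alpha+\lambda X$ has risk at most $\alpha$, exploding utility, and cost ${\cal P}(-\alpha+\lambda X)\le{\cal P}(-\alpha)+\lambda\,{\cal P}(X)\to-\infty$, so it eventually undercuts any prescribed budget.

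For the converse I would argue constructively, avoiding any compactness. Because ${\cal A}^{\rho,\alpha}$ is ineffective I may invoke the definition at a conveniently large negative budget $C=-K$, obtaining assets $X_n$ with $\rho(X_n)\le\alpha$, ${\cal P}(X_n)\le -K$ and $M_n:=E(X_n^+)\to\infty$. Rescaling $Z_n:=X_n/M_n$ and using positive homogeneity of both ${\cal P}$ and $\rho$ gives ${\cal P}(Z_n)\le -K/M_n$, $\rho(Z_n)\le\alpha/M_n$ and $E(Z_n^+)=1$. I would then correct the sign of the risk by adding the vanishing cash amount $\epsilon_n:=\max(\rho(Z_n),0)\to 0$: translation invariance yields $\rho(Z_n+\epsilon_n)\le 0$, monotonicity of $x\mapsto x^+$ gives $E((Z_n+\epsilon_n)^+)\ge E(Z_n^+)=1>0$, and subadditivity gives ${\cal P}(Z_n+\epsilon_n)\le(-K+\kappa)/M_n$, where $\kappa$ depends only on $\alpha$ and ${\cal P}(1)$. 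Choosing $K>\kappa$ at the outset makes this strictly negative, so $Z_n+\epsilon_n$ is a genuine $\rho$-arbitrage for large $n$.

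The step I expect to be the main obstacle is reaching \emph{arbitrarily} negative costs in the forward direction when ${\cal P}(X)=0$, since scaling $X$ no longer lowers the price. Here I would instead borrow against the risk-free asset, taking $Y_0=\lambda X-k$ with $k$ large enough that $k\,{\cal P}(-1)\le C$ and then letting $\lambda\to\infty$: subadditivity gives ${\cal P}(Y_0)\le k\,{\cal P}(-1)$, while the spare risk budget created by $\lambda\rho(X)\to-\infty$ absorbs the borrowing, provided $\rho(X)<0$ and ${\cal P}(-1)<0$. The genuinely degenerate subcase ${\cal P}(X)=\rho(X)=0$ is the sharpest point: neither scaling nor borrowing can then push the cost below a floor set by the bid-ask spread, so covering every negative $C$ appears to need an additional market input such as a strictly priced risk-free rate, and I would treat this case with care.
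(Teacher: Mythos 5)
Your argument follows the paper's strategy in both directions, so the core of the proposal is sound. In the forward direction the paper likewise seeds with a constant and invokes Theorem~\ref{thm:rhoArbitrageAndUtility} via \eqref{eq:rhoInequality} and \eqref{eq:priceInequality} (your sign $Y_0=-\alpha$, with $\rho(-\alpha)=\alpha$, is in fact the one that works for every $\alpha$; the paper writes $Y=\alpha$). In the converse the paper takes a single witness $X_M$ with ${\cal P}(X_M)\le-\alpha$ and $\E(X_M^+)\ge-\alpha+1$ and checks directly that $X_M+\alpha$ is a $\rho$-arbitrage; your rescaling by $M_n$ followed by the cash correction $\epsilon_n$ is a slightly more elaborate version of the same shift-into-the-feasible-cone idea, and is if anything more careful about pricing the cash adjustment (the paper implicitly uses ${\cal P}(X_M+\alpha)\le{\cal P}(X_M)+\alpha$). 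The only caveat there is that $M_n=\E(X_n^+)$ could be $+\infty$ for some $n$, in which case you should rescale by an arbitrary large constant instead of normalising; this is cosmetic.

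The one substantive point is the issue you flag yourself. The definition of ineffectiveness requires the supremum to be infinite for \emph{every} budget $C$, including arbitrarily negative ones, and the paper's proof does not engage with this: it only exhibits portfolios of cost at most ${\cal P}(\alpha)$, which verifies the definition only for $C\ge{\cal P}(\alpha)$. Your case analysis (scale when ${\cal P}(X)<0$; borrow against the spare risk budget when $\rho(X)<0$ and ${\cal P}(-1)<0$) correctly closes this for all but the fully degenerate case ${\cal P}(X)=\rho(X)=0$ with no profitable short cash position. That residual case is a genuine obstruction to the literal statement: in a coherent market whose only finitely priced assets are $\lambda X+c$ with ${\cal P}(\lambda X+c)=c\,e^{-rT}$, the feasible set $\{\rho\le\alpha,\ {\cal P}\le C\}$ forces $-\alpha\le c\le Ce^{rT}$ and is therefore empty once $C<-\alpha e^{-rT}$, so the constraint is not ineffective in the stated sense even though $X$ is a $\rho$-arbitrage. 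In short, you have not introduced a gap; you have located one that the paper's own proof silently skips, and your treatment of the non-degenerate cases is the correct repair.
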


The first result shows that a $\rho$-arbitrage can be exploited by
a trader to obtain arbitrarily good utilities $\tilde{u}$. The second gives a characterisation of effectiveness in terms of $\rho$-arbitrage. Our next result
shows how the same portfolios perform when measured with typical conventional concave increasing utility functions $u_R$, which might be thought of as the utility function of the risk-manager, the business overall or wider society.

\begin{theorem}
\label{thm:badForRiskManager} (Arbitrarily good trader utilities are ruled out by a classic  utility used as risk measure).
	Let $\rho$ be a coherent risk-measure. Let $u_R$ be any concave increasing utility function satisfying 
	\begin{equation}
	\lim_{\lambda \to \infty} \frac{u_R(-\lambda)}{\lambda} = -\infty
	\label{eq:riskAverseUtility},
	\end{equation}
	If $X$ 
	is a $\rho$-arbitrage and not a true arbitrage, and if both $\E(|X|)$ and $\E(u_R(
	-\beta Y))$ are finite for some $\beta>0$ then
	\begin{align}
	\lim_{\lambda\to \infty} \E( u_R(Y+ \lambda X)) &= -\infty. \label{eq:badForRiskManager}
	\end{align}
\end{theorem}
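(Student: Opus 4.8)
The plan is to use only the structural consequence of the hypotheses on $X$: a $\rho$-arbitrage that is \emph{not} a true arbitrage must take strictly negative values with positive probability. Indeed, a $\rho$-arbitrage already satisfies ${\cal P}(X)\le 0$ and $\P(X>0)>0$, so the only way it can fail to be a true arbitrage is that ``$X\ge 0$ almost surely'' fails; hence there are $\epsilon>0$ and $p>0$ with $\P(A)=p$, where $A:=\{X\le -\epsilon\}$. This event $A$ is the whole engine of the proof: on $A$ the position $Y+\lambda X$ is pushed towards $-\infty$ at rate $\lambda\epsilon$, and the super-linear decay \eqref{eq:riskAverseUtility} converts this into a loss that outgrows any linear gain. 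Before starting I would record one integrability fact: because $u_R(-t)$ is eventually more negative than any linear multiple of $t$, the negative part of $u_R(-\beta Y)$ dominates a multiple of $Y^+$ for $Y$ large, so finiteness of $\E(u_R(-\beta Y))$ forces $\E(Y^+)<\infty$. Together with $\E(|X|)<\infty$ this is all the integrability the argument needs; note in particular that $\E(|Y|)<\infty$ is not assumed.

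Next I would fix a supporting line of the concave increasing function $u_R$, i.e. constants $a_0\in\R$ and $m\ge 0$ with $u_R(x)\le a_0+mx$ for all $x$, and split $\E(u_R(Y+\lambda X))=\E(u_R(Y+\lambda X)\mathbbm{1}_{A^c})+\E(u_R(Y+\lambda X)\mathbbm{1}_A)$; the expectation is well defined in $[-\infty,\infty)$ since the affine bound makes the positive part integrable. On $A^c$ the same bound gives $\E(u_R(Y+\lambda X)\mathbbm{1}_{A^c})\le a_0\P(A^c)+m\E(Y^+)+m\lambda\E(X^+)$, using $\E(X\mathbbm{1}_{A^c})\le\E(X^+)$. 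Thus the complementary part grows at most linearly, with finite slope $K_1:=m\E(X^+)$.

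The heart of the proof is the term on $A$. There monotonicity gives $u_R(Y+\lambda X)\le u_R(Y-\lambda\epsilon)$, and I would study the normalised sequence $g_\lambda:=\lambda^{-1}u_R(Y-\lambda\epsilon)\mathbbm{1}_A$. Pointwise on $A$ one has $g_\lambda\to-\infty$: writing $s=Y-\lambda\epsilon\to-\infty$, the ratio $u_R(s)/s\to+\infty$ by \eqref{eq:riskAverseUtility} while $s/\lambda\to-\epsilon$, so their product tends to $-\infty$; off $A$ the sequence vanishes. The affine bound also dominates $g_\lambda\le |a_0|+mY^+$ for $\lambda\ge 1$, an integrable majorant. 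Reverse Fatou then yields $\limsup_{\lambda\to\infty}\E(g_\lambda)\le\E(\limsup_\lambda g_\lambda)=-\infty$, because $\limsup_\lambda g_\lambda=-\infty$ on the positive-probability set $A$. Equivalently $\E(u_R(Y-\lambda\epsilon)\mathbbm{1}_A)=\lambda\,\E(g_\lambda)$ decays strictly faster than linearly, so for all large $\lambda$ it lies below $-(K_1+1)\lambda$.

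Combining the two parts, for large $\lambda$ one gets $\E(u_R(Y+\lambda X))\le a_0\P(A^c)+m\E(Y^+)+K_1\lambda-(K_1+1)\lambda$, which tends to $-\infty$ and is exactly \eqref{eq:badForRiskManager}. I expect the main obstacle to be precisely this $A$-term: one must turn the qualitative super-linearity \eqref{eq:riskAverseUtility} into a quantitative rate that beats the genuinely positive linear slope $m\E(X^+)$ coming from the upside of $X$ on $A^c$, and must do so uniformly even though $Y$ is random and only $\E(Y^+)$ is controlled. Normalising by $\lambda$ and invoking reverse Fatou with the integrable majorant $|a_0|+mY^+$ is the device that disposes of the randomness of $Y$ cleanly; truncating onto a set $\{Y\le c\}$ and estimating the remainder affinely is an alternative that also works but is more cumbersome.
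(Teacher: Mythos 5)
Your proof is correct, but it handles the random variable $Y$ by a genuinely different device than the paper. The paper's first move is a convex-combination inequality: with $\alpha=\beta/(1+\beta)$ it writes $\alpha\lambda X$ as $\alpha(\lambda X+Y)+(1-\alpha)(-\beta Y)$ and uses concavity to get $u_R(\lambda X+Y)\le \tfrac{1}{\alpha}u_R(\alpha\lambda X)-\tfrac{1}{\beta}u_R(-\beta Y)$, which (by the finiteness of $\E(u_R(-\beta Y))$) removes $Y$ from the problem entirely; it then performs the same split on $\{X\le-\epsilon\}$ versus its complement, but the super-linear decay \eqref{eq:riskAverseUtility} is applied to the deterministic quantity $u_R(-\lambda\epsilon)$, so no convergence theorem is needed. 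You instead keep $Y$ throughout, observe that \eqref{eq:riskAverseUtility} together with finiteness of $\E(u_R(-\beta Y))$ forces $\E(Y^+)<\infty$, and then use reverse Fatou on the normalised family $\lambda^{-1}u_R(Y-\lambda\epsilon)\mathbbm{1}_A$ with the integrable majorant $|a_0|+mY^+$ to show the loss on $A$ outpaces the linear gain $m\lambda\E(X^+)$ from $A^c$. The trade-off: the paper's argument is more elementary and exploits the stated hypothesis on $Y$ directly, whereas your argument reveals that the hypothesis is only needed through $\E(Y^+)<\infty$ (and $\E(|X|)$ only through $\E(X^+)<\infty$), so it establishes a marginally more general statement at the cost of invoking dominated-convergence-type machinery; both arguments share the essential engine, namely the positive-probability event $\{X\le-\epsilon\}$ supplied by the assumption that $X$ is not a true arbitrage, together with an affine majorant of the concave $u_R$.
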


Even very mildly risk-averse utility functions will satisfy \eqref{eq:riskAverseUtility} for example the function defined by
\[
u_{R,\eta}(x) = \begin{cases}
-(-x)^\eta & \text{when } x \leq 0 \\
0 & \text{otherwise}
\end{cases}
\]
satisfies \eqref{eq:riskAverseUtility} for any $\eta >1$. Thus for any such $u_R$, $\rho$-arbitrage opportunities give unbounded upward potential for the utility of a rogue investor and unbounded downward potential for the utility of a risk manager with utility $u_R$.

In a similar vein, the next theorem shows that utility based risk 
constraints will typically be effective in finite-dimensional linear markets.
\begin{definition}
A {\em finite-dimensional linear market} is a market where ${\cal P}^{-1}(\R)$
is a vector subspace of $L^0(\Omega,\P,R)$ and where ${\cal P}$ is a linear
functional on ${\cal P}^{-1}(\R)$.
\end{definition}
\begin{theorem}
\label{thm:utilityConstraintsEffective} (Classic utilities are ineffective as risk measures if and only if the market is arbitrageable in the classic sense).
Let $u_R$ be a concave increasing utility function satisfying 
\eqref{eq:riskAverseUtility} and $L \in \R$, then for a finite-dimensional
linear market with ${\cal P}^{-1}(\R)\subseteq L^1(\Omega, \P, \R)$, the set
\[
{\cal A}:=\{ Y \mid \E(u_R(Y)) \geq L \}
\]
is ineffective if and only if the market contains a true arbitrage. 
\end{theorem}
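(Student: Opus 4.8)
The plan is to prove the two implications separately: the ``if'' direction (a true arbitrage forces ineffectiveness) by an explicit construction, and the ``only if'' direction by its contrapositive, namely that an arbitrage-free finite-dimensional linear market makes ${\cal A}$ effective. Throughout I use that $u_R$ is increasing and concave, that it satisfies the superlinear decay \eqref{eq:riskAverseUtility}, and that a risk-free asset is present so that cash can be raised at any level; for classic utilities I also use that $u_R$ is unbounded above.

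For the ``if'' direction, suppose $X$ is a true arbitrage, so $X\geq 0$ almost surely, $\P(X>0)>0$ and ${\cal P}(X)\leq 0$. Fixing any cost $C$, I would first produce a single base portfolio $Y_0$ with ${\cal P}(Y_0)\leq C$ and $\E(u_R(Y_0))\geq L$: for $C\geq 0$ a large multiple of $X$ suffices, while for $C<0$ one raises the required cash by shorting the risk-free asset and then restores the utility floor $L$ using the unbounded upside that $X$ supplies on $\{X>0\}$. The key observation is that the whole family $Y_0+\lambda X$, $\lambda\geq 0$, remains feasible: since $X\geq 0$ and $u_R$ is increasing, $\E(u_R(Y_0+\lambda X))\geq \E(u_R(Y_0))\geq L$, and since ${\cal P}(X)\leq 0$, we have ${\cal P}(Y_0+\lambda X)\leq {\cal P}(Y_0)\leq C$. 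As $(Y_0+\lambda X)^+\to\infty$ pointwise on $\{X>0\}$, monotone convergence gives $\E(\tilde u(Y_0+\lambda X))\to\infty$, so the constrained supremum is infinite for every $C$.

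For the ``only if'' direction I argue the contrapositive. Assume no true arbitrage, fix a cost $C$, and suppose for contradiction that $\sup\{\E(\tilde u(Y)) : Y\in{\cal A},\,{\cal P}(Y)\leq C\}=\infty$; pick a maximising sequence $Y_k$. Because $V:={\cal P}^{-1}(\R)\subseteq L^1$ is finite-dimensional, all norms on $V$ are equivalent and $\E(Y^+)\leq\E|Y|\leq c\|Y\|$, so $\E(Y_k^+)\to\infty$ forces $\|Y_k\|\to\infty$. Normalising $W_k:=Y_k/\|Y_k\|$ and using compactness of the unit sphere of $V$, I pass to a subsequence with $W_k\to W$ in $V$, hence in $L^1$ and almost surely along a further subsequence, with $\|W\|=1$. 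The heart of the proof is to show $W\geq 0$ almost surely: on a set $\{W<-\delta\}$ of positive probability one has $Y_k=\|Y_k\|W_k\to-\infty$, and by \eqref{eq:riskAverseUtility} the resulting negative contribution to $\E(u_R(Y_k))$ grows \emph{superlinearly} in $\|Y_k\|$, whereas concavity gives the affine bound $u_R(y)\leq u_R(0)+m_0\,y$ (with $m_0$ a supergradient at $0$), so the positive contribution is $O(\|Y_k\|)$, i.e. only \emph{linear}. The superlinear loss therefore dominates, forcing $\E(u_R(Y_k))\to-\infty$ and contradicting $\E(u_R(Y_k))\geq L$; hence $W\geq 0$. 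Since $\|W\|=1$ we get $\P(W>0)>0$, and continuity of the linear functional ${\cal P}$ on the finite-dimensional $V$ gives ${\cal P}(W)=\lim {\cal P}(Y_k)/\|Y_k\|\leq\lim C/\|Y_k\|=0$. Thus $W$ is a true arbitrage, a contradiction, so the supremum is finite and ${\cal A}$ is effective.

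I expect the main obstacle to be this arbitrage-extraction step: rigorously converting an unbounded objective into a genuine arbitrage direction $W$. The delicate part is making the competition between the superlinear penalty on $\{W<0\}$ and the linear gain bound from concavity uniform in $k$, which requires careful mode-of-convergence bookkeeping (norm convergence in $V$ equals $L^1$-convergence equals almost-sure convergence along a subsequence) together with enough integrability to pass to the limit in $\E(u_R(Y_k))$. A secondary point, needed only for the ``if'' direction, is to confirm that a risk-free asset is available and that $u_R$ is unbounded above: these are precisely what let the base portfolio $Y_0$ meet the utility floor $L$ at arbitrarily negative cost, since without them a bounded-utility investor cannot compensate the sure loss incurred in raising cash.
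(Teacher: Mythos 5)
Your proposal is essentially correct, but it is packaged differently from the paper's proof, and it actually covers more ground. The paper's appendix proof establishes only the direction ``ineffective $\Rightarrow$ true arbitrage'': it forms the convex set ${\cal C}$ of coefficient vectors $\bm\alpha$ with ${\cal P}(\bm\alpha\cdot\bm X)\leq 0$ and $\E(u_R(\bm\alpha\cdot\bm X))\geq\min\{0,L\}$, observes that ineffectiveness makes ${\cal C}$ unbounded (via the linear bound $\E(\tilde u(\bm\alpha\cdot\bm X))\leq\bm\alpha^+\E(\bm X^+)$), and then uses the fact that an unbounded convex set containing the origin in $\R^N$ contains a ray $\{\lambda\bm\alpha^*:\lambda\geq 0\}$. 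The payoff of the ray argument is that \emph{exact} feasibility $\E(u_R(\lambda\bm\alpha^*\cdot\bm X))\geq\min\{0,L\}$ holds for \emph{every} $\lambda$, so the superlinear-penalty-versus-affine-upper-bound comparison (the same key inequality you use, with \eqref{eq:riskAverseUtility} against a supergradient bound $u_R(x)\leq ax+b$) immediately rules out $\P(\bm\alpha^*\cdot\bm X<0)>0$ with no limiting arguments. Your normalized maximizing sequence $W_k=Y_k/\|Y_k\|$ with compactness of the unit sphere reaches the same limit direction, but at the cost of the mode-of-convergence bookkeeping you flag: to make the negative contribution uniformly superlinear you need an Egorov-type step on $\{W<-\delta\}$ (which does close the gap, so this is a surmountable technicality rather than an error). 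On the ``if'' direction, the paper's proof is silent, whereas you give the natural construction $Y_0+\lambda X$ and correctly identify that it needs hypotheses not in the statement: an asset of nonzero price to meet negative cost budgets $C$, and $u_R$ unbounded above so that the floor $L$ can be met. This is a genuine observation --- for a bounded-above $u_R$ satisfying \eqref{eq:riskAverseUtility} (e.g.\ exponential utility) and $L$ large, ${\cal A}$ can fail to meet the budget set and the constraint is then vacuously effective even in the presence of a true arbitrage --- so your added assumptions are not pedantry but point at an implicit hypothesis the theorem needs.
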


Theorem \ref{thm:rhoArbitrageCharacterisation} tells us that we can detect whether a given coherent risk measure $\rho$ leads to ineffective risk-constraints in a given coherent market ${\cal M}$ by solving the convex optimization problem
\[
\begin{array}{lcl}
\underset{X \in L^0(\Omega; \R)}{\text{minimize}} & & \ess \inf -X \\
\text{subject to} & & {\cal P}(X)\leq 0 \\
\text{and} & & \rho(X)\leq 0. \\
\end{array}
\]
Our assumptions on the coherence of the market and of $\rho$ ensure that constraints
are indeed convex. The minimum achieved will be negative (indeed it will then equal $-\infty$) if and only if $\rho$ is ineffective. Since this is a convex optimization problem it is relatively straightforward to solve in practice. We will use this method to find a number of
markets which contain a $\rho$-arbitrage in the later sections of this paper.

\section{Analytic results}
\label{section:analytic}

We will write $\ES_p$ for the coherent risk measure given by expected shortfall at confidence level $p$ \citep{acerbitasche}. This is defined by
\[
\ES_p(X) = \frac{1}{p} \int_0^p \text{VaR}_p(X) \ed p
\]
where value at risk at confidence level $p$, $\text{VaR}_p$, is defined in turn by
\[
\text{VaR}_p(X) = -\inf \{ x \in \R: F_X(x)>p \}
\]
where $F_X$ is the cumulative distribution function of $X$.

In this section we consider the question of when $\ES_p$-arbitrage opportunities
exist in some simple markets where we can find analytical results. We consider the contrasting
cases of a highly incomplete and a complete market.

In section \ref{section:markowitz} we will consider the markets of normally distributed assets as considered by \cite{markowitz}, this is a highly incomplete market. In section \ref{section:complete} we will consider complete markets. 
We will find that in the highly incomplete market of normally distributed assets $\ES_{0.01}$-arbitrage is unrealistic. Whereas in a typical complete market such as the Black-Scholes model  $\ES_p$-arbitrage should be expected for all $p$.

\subsection{Normally distributed assets}
\label{section:markowitz}

We suppose that we wish to invest in the market of the Markowitz model. We suppose
there are $N$ assets $X_1$, $X_2$, \ldots, $X_N$ whose payoffs follow a multivariate normal distribution with mean vector $\bs \mu$ and covariance matrix $\bs \Sigma$. A portfolio represented by the vector $\bs \alpha$ consists of $\alpha_j$ units of stock $j$. The expected return of this portfolio is ${\bs \mu}^\top {\bs \alpha}$ and the variance is ${\bs \alpha}^T {\bs \Sigma} {\bs \alpha}$. The cost of portfolio ${\bs \alpha}$ is assumed to be ${\bs c}^\top {\bs \alpha}$ for some vector ${\bs c}$. So ${\cal P}$ is given by:
\[
{\cal P}(X)=\begin{cases}
{\bs c}^\top {\bs \alpha} & \text{when } X = \sum_{j=1}^N \alpha_j X_j \\
\infty & \text{otherwise}.
\end{cases}
\]
We will suppose that, up to scale there is only one risk-free portfolio.

This defines a coherent market which we will call a Markowitz market with risk free asset.

\begin{theorem}
Suppose $p<0.5$. Let $E(p)$ denote the expected shortfall of a standard normal random variable at confidence level $p$. Then a Markowitz market with risk free asset admits a $\ES_p$-arbitrage
if and only either
\[
g \geq E(p)
\]
or 
\[
1 + R_F < 0
\]
where $g$ is the gradient of the
capital allocation line and $R_F$ is the risk free return.
\end{theorem}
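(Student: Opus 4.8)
The plan is to reduce the existence of an $\ES_p$-arbitrage to a concrete optimization problem over portfolio vectors $\bs\alpha$ in the Markowitz market, using the explicit formula for $\ES_p$ of a normal random variable and the structure of the cost functional. Recall that an $\ES_p$-arbitrage is a random variable $X$ with ${\cal P}(X)\le 0$, $\ES_p(X)\le 0$, and $X>0$ with positive probability. In this market every admissible $X$ with finite price is of the form $X=\sum_j \alpha_j X_j$, so $X$ is normal with mean $m:={\bs\mu}^\top{\bs\alpha}$ and standard deviation $s:=\sqrt{{\bs\alpha}^\top{\bs\Sigma}{\bs\alpha}}$. The first step is to record that for a normal variable $N(m,s)$ one has $\ES_p(N(m,s)) = -m + s\,E(p)$, where $E(p)>0$ is the expected shortfall of a standard normal at level $p<0.5$ (this follows by translation-invariance and positive-homogeneity of $\ES_p$ together with $\ES_p(N(0,1))=E(p)$). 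Thus the condition $\ES_p(X)\le 0$ becomes $m \ge s\,E(p)$, and the condition that $X$ is positive with positive probability is automatic unless $X$ is degenerate (i.e.\ $s=0$ with $m\le 0$, or $X\equiv 0$).

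**Next I would** translate the price constraint. Writing $P:={\bs c}^\top{\bs\alpha}$ for the cost, the requirement ${\cal P}(X)\le 0$ is just $P\le 0$. So an $\ES_p$-arbitrage exists if and only if there is a portfolio $\bs\alpha$ with $P\le 0$, $m\ge s\,E(p)$, and $X$ nondegenerate-positive. The natural move is to pass to the normalized (excess-return) picture familiar from the capital allocation line: using the assumed unique risk-free portfolio with return $R_F$, decompose a general portfolio into its risk-free part and a unit-cost risky part, and consider the Sharpe-type ratio $g := \sup (m - R_F\,P)/s$ over risky portfolios, which by definition is the gradient of the capital allocation line. The plan is to show that the feasibility of the three constraints above is governed precisely by comparing $g$ with $E(p)$, with a separate degenerate branch when the risk-free return itself makes the constant payoff $1$ profitable at nonpositive cost, i.e.\ when $1+R_F<0$.

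**The key computation** is the case analysis. In the generic (nondegenerate) case, I would argue that one can always add a suitable multiple of the risk-free asset to drive the cost to exactly $0$ while scaling the risky exposure freely by positive homogeneity; then the surviving question is whether, at zero net cost, one can achieve $m \ge s\,E(p)$ with $s>0$. For a zero-cost portfolio, $m$ equals the excess return over what the risk-free asset would have produced, so the best achievable ratio $m/s$ is exactly $g$, giving feasibility iff $g\ge E(p)$. This handles the first alternative. The second alternative, $1+R_F<0$, covers the \emph{degenerate} $s=0$ escape route: buying the risk-free payoff $1$ at cost $1/(1+R_F)<0$ (equivalently shorting a risk-free asset with negative return) yields a strictly positive, riskless payoff at negative price, which is an $\ES_p$-arbitrage regardless of $g$. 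I would verify both directions (that either condition suffices, and that failure of both forces every candidate to violate one of the three requirements).

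**The main obstacle** I anticipate is the careful bookkeeping around the boundary and degenerate cases: ensuring the supremum defining $g$ is attained or approached by genuinely admissible nondegenerate portfolios, handling the knife-edge $g=E(p)$ (where the inequality is nonstrict, so I must produce an exact arbitrage rather than an approximate one), and making precise the ``up to scale unique risk-free portfolio'' hypothesis so that the decomposition into risk-free plus unit-cost-risky parts is well defined and $R_F$ is unambiguous. I would also need to confirm that in the $g\ge E(p)$ branch the resulting $X$ indeed has positive probability of a positive value (guaranteed since $s>0$ for a normal), and that the converse genuinely excludes arbitrage when both $g<E(p)$ and $1+R_F\ge 0$, by showing any feasible $\bs\alpha$ then has $m<s\,E(p)$ unless it is the trivial degenerate portfolio.
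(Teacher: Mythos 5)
Your proposal is correct and follows essentially the same route as the paper: the same formula $\ES_p$ of a normal equals $sE(p)-m$, the same reduction to the pair of constraints ${\bs c}^\top{\bs \alpha}\leq 0$ and $sE(p)\leq m$, and the same case split on the sign of $1+R_F$ (with the degenerate risk-free arbitrage when $1+R_F<0$). The only presentational difference is that where you carry out the risk-free/risky decomposition by hand to identify the best zero-cost ratio $m/s$ with the capital-allocation-line gradient $g$, the paper obtains the equivalent reduction at a stroke by invoking the normal form for Markowitz markets from \cite{armstrongMarkowitz}, in which $g$ appears as an explicit coordinate of $\bs \mu$.
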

\begin{proof}
A normally distributed asset $X$ with mean $\mu$ and standard deviation $\sigma$ satisfies
\[
\ES_p(X) = \sigma E(p) - \mu
\]
So ${\bs \alpha}$ represents a $\ES_p$-arbitrage portfolio if and only if
\begin{equation}
\sqrt{{\bs \alpha}^\top {\bs \Sigma} {\bs \alpha}} E(p)  -{\bs \mu}^\top{\bs \alpha} \leq 0
\label{eq:markowitzRiskCondition}
\end{equation}
and
\begin{equation}
{\bs c}^\top{\bs \alpha} \leq 0.
\label{eq:markowitzPriceCondition}
\end{equation}
By the classification of Markowitz markets in \cite{armstrongMarkowitz} we may assume
without loss of generality that
\[
\Sigma = \left(\begin{array}{cc}
1_{N-1} & 0 \\
0 & 0 
\end{array}
\right), \quad {\bs c}^\top=(0,0,\ldots, 0, 1),\quad {\bs \mu}^\top=(g,0,0,\ldots,0,1+R_F) 
\]
where $1_{N-1}$ is the identity matrix of size $N-1$ and $g\geq 0$. So equations \eqref{eq:markowitzPriceCondition} and \eqref{eq:markowitzRiskCondition} become
\begin{equation}
(\sum_{j=1}^{N-1} \alpha_j^2)^{\frac{1}{2}} E(p) - g \alpha_1 - (1+R_F) \alpha_N \leq 0
\label{eq:markowitzToSolve}
\end{equation}
and
\[
\alpha_N \leq 0
\]
respectively. If $1+R_F<0$ we can always solve \eqref{eq:markowitzToSolve} simply by choosing a sufficiently small value for $\alpha_N$. If $1+R_F\geq0$ then any solution to \eqref{eq:markowitzToSolve} must satisfy
\[
(\sum_{j=1}^{N-1} \alpha_j^2)^{\frac{1}{2}} E(p) \leq g \alpha_1
\]
and hence 
\[
|\alpha_1| E(p) \leq g \alpha_1.
\]
The result now follows.
\end{proof}

So for an $\ES_{0.01}$-arbitrage to exist in a Markowitz market with positive interest rates, one would require $g > 2.665$. This is an unrealistically steep
capital allocation line for investments over a time period of a year or less.

\subsection{$\ES_p$-arbitrage portfolios in complete markets}
\label{section:complete}

We now consider the case of complete markets.

\begin{theorem}
\label{thm:completeMarkets}
Let ${\cal M}$ be a complete market given by an
atomless probability space $(\Omega, {\cal F}, \P)$ equipped with a measure $\Q$ equivalent to $\P$.
We suppose that any $X \in L^\infty(\Omega; \R)$ can be purchased at the price
\[
{\cal P}(X) = e^{-rT} \E_\Q( X)
\]
where $T$ is the time horizon of the investment and $r$ is the risk-free rate.
This market admits an $\ES_p$-arbitrage if and only if
\[
\P\left( \frac{\ed \Q}{\ed \P} \geq \frac{1}{p} \right) > 0
\]
\end{theorem}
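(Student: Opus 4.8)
The plan is to translate both defining conditions of a $\rho$-arbitrage into conditions on the pricing density $Z := \frac{\ed \Q}{\ed \P}$ and then prove the two implications separately. Writing $X$ for a candidate payoff, the cost constraint ${\cal P}(X)\le 0$ is equivalent to $\E_\P(ZX)\le 0$, while the standard dual representation of expected shortfall, $\ES_p(X)=\sup\{\E_\P(-gX): 0\le g\le \tfrac1p,\ \E_\P(g)=1\}$, makes $\ES_p(X)\le 0$ equivalent to $\E_\P(gX)\ge 0$ for every admissible test density $g$. The atomless hypothesis is what lets me realise the extremal $g$ concretely as $g=\tfrac1p\mathbbm 1_S$ for sets $S$ of $\P$-measure exactly $p$, and this is what makes both halves tractable.

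For the ``if'' direction I would exhibit an explicit two-valued $\ES_p$-arbitrage, a bounded claim which the complete market prices by assumption. Assuming $\P(Z\ge\tfrac1p)>0$, I use atomlessness to pick $L\subseteq\{Z\ge\tfrac1p\}$ with $0<\P(L)=\ell<p$ and set $X=-m$ on $L$ and $X=k$ on $L^c$ with $m,k>0$. Direct computation from the definitions gives $\E_\P(ZX)=-m\,\E_\P(Z\mathbbm 1_L)+k\bigl(1-\E_\P(Z\mathbbm 1_L)\bigr)$ and, since $\ell<p$, $\ES_p(X)=\tfrac1p\bigl(m\ell-k(p-\ell)\bigr)$. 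Both are non-positive simultaneously exactly when $\E_\P(Z\mathbbm 1_L)\ge\ell/p$, which holds automatically because $Z\ge\tfrac1p$ on $L$; the boundary choice $k=m\ell/(p-\ell)$ always works, and as $\P(X>0)=1-\ell>0$ this $X$ is the desired arbitrage. Keeping $\ell$ strictly below $p$ (so the losses do not fill the entire tail) is the one point requiring care here.

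For the ``only if'' direction I argue by contraposition: assuming $Z<\tfrac1p$ almost surely I show any $X$ meeting both constraints is constant, hence not an arbitrage. Let $q$ be a lower $p$-quantile of $X$ and choose, by atomlessness, $S$ with $\{X<q\}\subseteq S\subseteq\{X\le q\}$ and $\P(S)=p$, so $g^*=\tfrac1p\mathbbm 1_S$ is the extremal allocation and $\E_\P(g^*X)=-\ES_p(X)\ge 0$. Because $\E_\P(Z-g^*)=0$, the key identity is $\E_\P\bigl[(Z-g^*)(X-q)\bigr]=\E_\P(ZX)-\E_\P(g^*X)\le 0$, whereas the integrand is pointwise non-negative: on $S$ both factors are $\le 0$ (here $Z-\tfrac1p<0$ uses the strict assumption) and on $S^c$ both are $\ge 0$. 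The integrand therefore vanishes almost surely, which forces $X=q$ a.s.; then $\E_\P(ZX)=q\le 0$ contradicts $\P(X>0)>0$. The main obstacle is precisely this boundary bookkeeping: the equivalence turns on whether $\{Z\ge\tfrac1p\}$ is null, so I must treat atoms of the law of $X$ at $q$ carefully through the splitting set $S$ and check that the strictness $Z-\tfrac1p<0$ on $S$ is genuinely what lets me conclude $X=q$ there, which is also why the construction in the other direction still succeeds when $Z$ merely equals $\tfrac1p$.
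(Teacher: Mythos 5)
Your proof is correct, and it takes a genuinely different route from the paper's, most visibly in the ``only if'' direction. The paper works entirely through rearrangements: it reduces to claims of the form $f(U)$ with $f$ increasing and $\frac{\ed \Q}{\ed \P}=q(U)$ for a decreasing $q$, normalises to $\ES_p=0$, shows by a Hardy--Littlewood-type comparison that the two-valued step function $g_{p,\alpha,\beta}$ is price-minimal in each class $A_{p,\alpha,\beta}$, and then reads off both directions of the equivalence from the single formula for ${\cal P}(g_{p,\alpha,\beta}(U))$ as $\alpha\to-\infty$. You split the two directions instead. Your ``if'' construction arrives at essentially the same two-valued claim as the paper's extremal $g_{p,\alpha,\beta}$, but you place the loss directly on a subset $L\subseteq\{Z\ge\tfrac1p\}$ with $\P(L)=\ell<p$ and verify the two constraints by hand (the computation $-ms+\tfrac{m\ell}{p-\ell}(1-s)=\tfrac{m}{p-\ell}(\ell-sp)\le 0$ for $s=\E_\P(Z\mathbbm{1}_L)\ge\ell/p$ checks out, and it correctly covers the boundary case $Z\equiv\tfrac1p$ on $L$); this is shorter because you never need to prove extremality. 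Your ``only if'' direction is the genuinely new part: combining the dual representation of $\ES_p$ with extremal density $g^*=\tfrac1p\mathbbm{1}_S$, the identity $\E_\P[(Z-g^*)(X-q)]=\E_\P(ZX)-\E_\P(g^*X)\le 0$, and the pointwise non-negativity of the integrand is a clean complementary-slackness argument that forces $X$ to be constant, avoids the rearrangement machinery altogether, and makes transparent exactly where the strict inequality $Z<\tfrac1p$ enters. The trade-off is that you lean on the (standard, but here unproved) fact that the dual supremum is attained by $g^*$ on atomless spaces, where the paper instead imports its rearrangement lemma from the authors' earlier work. One small point worth making explicit in a final write-up: the cost constraint ${\cal P}(X)\le 0$ already forces any $\ES_p$-arbitrage into $L^\infty$, since claims outside $L^\infty$ are priced at $+\infty$; this is what guarantees that all the expectations in your argument are finite.
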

\begin{remark}
It already follows from
the results of \cite{armstrongbrigoutilityjbf} that expected shortfall is ineffective
for any confidence level $p$ in complete markets where the Radon-Nikodym derivative
\[
\frac{\ed \Q}{\ed \P}
\]
is essentially unbounded. Thus the new result in Theorem \ref{thm:completeMarkets}
is the proof of the converse. As was shown in \cite{armstrongbrigoutilityjbf, armstrongBrigoRiskMagazine}, in complete markets such as the Black--Scholes model
with non-zero market price of risk, one should expect $\frac{\ed \Q}{\ed \P}$
to be essentially unbounded and hence for expected shortfall to be ineffective at
all confidence levels.
\end{remark}

\section{Numerical results}
\label{section:numerics}

In this section we will see how one can detect whether $\ES_p$ exists in
a realistic market numerically. In Section \ref{section:numericsTheory} we will outline a
general approach to detecting $\ES_p$ arbitrage. In Section \ref{section:numericsExample}
we will apply this to the specific case of options on the S\&P 500.

\subsection{Detecting $\ES_p$ arbitrage numerically}
\label{section:numericsTheory}
	
Let us begin by introducing some notation. We will assume that there are $N_I$ available instruments that one can invest in at time $0$. The price of instrument $i$ is $p_i$ and the payoff at time $T$ is the random variable $f_i(\omega)$. We write ${\bf p}$ for the vector of the prices of each instrument, and ${\bf f}(\omega)$ for the random vector containing all the payoffs. The investor chooses a portfolio containing $x_i$ units of instrument $i$. We write ${\bf x}$ for the vector with components $x_i$. To model a bid ask spread, we require that each $x_i \geq 0$ and model shorting a security as purchasing positive quantities of an asset with a negative price.

To find $\ES_p$-arbitrage portfolios, we seek portfolios of negative cost and negative expected shortfall. Thus we will consider the convex optimization problem:
\begin{equation}
\begin{aligned}
\underset{{\bf x}}{\text{minimize}} & & ES_p( \bf{x} ) & & \\
\text{subject to} & & & & \\
\text{cost constraint} & & {\bf p}\cdot {\bf x} \leq 0, & & \\
\text{quantity constraints} & & 0 \leq x_i \leq 1 & & (1 \leq i \leq N_I). \\
\end{aligned}
\label{eq:esArbitrageProblem}
\end{equation}
If the minimizing portfolio has strictly negative expected shortfall then it must be
a $\ES_p$-arbitrage portfolio. Note that we impose an upper bound constraint on each $x_i$ in order to ensure that the optimization problem always has a finite solution. Since one can always rescale an $\ES_p$-arbitrage portfolio, this additional upper bound constraint is harmless.

To solve this convex optimization problem in practice, we may use the techniques of \cite{rockafellarUrayasev}. Theorem 1 of that paper proves that
\[
\ES_p(x) = \min_\alpha F_p({\bf x},\alpha)
\]
where we define
\[
F_p({\bf x}, \alpha) = \alpha +  \frac{1}{p} \E ((-{\bf f}(\omega)\cdot {\bf x}-\alpha)^+ ).
\]

We next choose a quadrature rule for expectations with $N_Q$ evaluation points ${\Omega}_i$ and weights $w_i$ so that we can make the approximation
\begin{equation}
\E( g ) \approx \sum_{i=1}^{N_Q}  w_i \, g(\Omega_i)
\label{eq:quadRule}
\end{equation}
for suitably well-behaved random variables $g$. We may then approximate the expected shortfall as
\[
\ES_p(x) \approx \min_\alpha \tilde{F}_p({\bf x},\alpha)
\]
where
\[
\tilde{F}_p({\bf x}, \alpha) =  \alpha + \frac{1}{p} \sum_{i=1}^{N_Q} w_i \, (-{\bf f}(\Omega_i)\cdot {\bf x}-\alpha)^+.
\]
Following \cite{rockafellarUrayasev}, if we introduce auxiliary variables $u_i$ to replace the terms $(-{\bf f}(\Omega_i)\cdot {\bf x}-\alpha)^+$ we may approximate \eqref{eq:esArbitrageProblem} with the linear programming problem:
\begin{equation}
\begin{aligned}
\underset{\alpha,{\bf x}, {\bf u}}{\text{minimize}} & & \alpha + \frac{1}{p} \sum_{i=1}^{\tilde{N}_Q} w_i \, u_i & & \\
\text{subject to} & & & & \\
\text{cost constraint} & & {\bf p}\cdot {\bf x} \leq C, & & \\
\text{quantity constraints} & & 0 \leq x_i \leq 1 & & (1 \leq i \leq N_I) \\
\text{auxiliary constraints} & & u_i \geq 0 & & \\
 & & u_i \geq -{\bf f}(\Omega_i)\cdot {\bf x}-\alpha & & (1 \leq i \leq \tilde{N}_Q). \\
\end{aligned}
\label{eq:esArbitrageProblemApprox}
\end{equation}

The simplest choice of quadrature rule is Monte Carlo. We simply simulate $N_Q$ sample points $\Omega_i$ and give each sample point equal weight.

However, in the one dimensional case where the underlying is a single stock price $S_T$, if we know how to compute the integrals
\begin{equation}
\int_{p_1}^{p_2} p(S_T) \, \ed S_T, \qquad \int_{p_1}^{p_2} S_T p(S_T) \, \ed S_T, 
\label{eq:typicalIntegral}
\end{equation}
analytically, a better choice of quadrature rule can be obtained by first choosing integration points $S_1< S_2< \ldots < S_{N_Q}$ and then selecting the weights $w_i$ such that the quadrature rule is exact for payoff functions which are continuous and linear except at the points $S_2, S_3, \ldots, S_{N_Q-1}$. If the points $S_i$ include all the strike prices of European puts and calls available in the market, then all possible portfolio payoffs for European option portfolios will be of this form.

As we have seen, once we have chosen our quadrature rule we can find out if a $\ES_p$-arbitrage portfolio exists by solving the optimization problem \eqref{eq:esArbitrageProblemApprox}. We can then use the method of bisection to find the lowest $p$ for which $\ES_p$ arbitrage portfolios exist.

\subsection{$\ES_p$-arbitrage opportunities on the S\&P 500}
\label{section:numericsExample}

We apply the theory of Section \ref{section:numericsTheory} to the market of European options on the S\&P 500. We
consider buy and hold strategies in exchange traded European options on this index. We only consider portfolios where all the options expire on the same maturity date and use this maturity date as the time horizon in our computation of expected shortfall. 

For every day in the week commencing 10 Feb 2014, we obtained bid and ask prices for all the exchange traded options on the S\&P 500 with maturity 22nd March 2014 \citep{bloomberg}.
This data determines our pricing function ${\cal P}$ for a portfolio of options.

We must then choose a probability model for the S\&P 500 Index value on the maturity date. We may then view the option payoffs as random variables in this probability model, and this will describe
the market in full. The idea of studying this market is taken from \cite{udomsak}.

The choice of probability model for the index value is subjective. We considered
the following possibilities:
\begin{enumerate}[(i)]
	\item A $\GARCH(1,1)$ model for the log returns,
		  calibrated to the same historic return data. This was estimated
		  using the MATLAB functions \verb=garch= and \verb=estimate=. We then simulated $10^6$
		  returns to obtain a Monte Carlo quadrature rule for this model.
	\item We calibrated a $\Q$-measure probability model ${\cal M}^\Q$ given by a mixture of two normal to fit the market volatility smile for the options and assumed this was also
	the $\P$-measure model. Mixture dynamical models have been used under the pricing measure $\Q$ for smile modelling, see for example \cite{brigomix,alexander2004normal}.
	
Mixture models have also been used under the measure P  for portfolio allocation, see for example the work by Roncalli and co-authors \cite{ronc1, ronc2}, or for inclusion of liquidity risk in risk measures via random holding period, see \cite{brigonordio}. Our choice of a mixture model as both $\P$ and $\Q$ measure is not intended to be a realistic approach to choosing a $\P$-measure model, simply an attempt to find a statistical model that is close to market prices to discover whether $\ES_p$-arbitrage persists even when the $\P$ and $\Q$ measures are very close. The fit of the calibrated model to the market volatility smile
	is shown in figure \ref{fig:mixturesmile}. Because the integrals \eqref{eq:typicalIntegral} can be computed analytically for this model we were able to test whether $\ES_p$ arbitrage exists without using Monte Carlo quadrature.
\end{enumerate}

\begin{figure}[htbp!]
	\centering
	\includegraphics[width=0.7\linewidth]{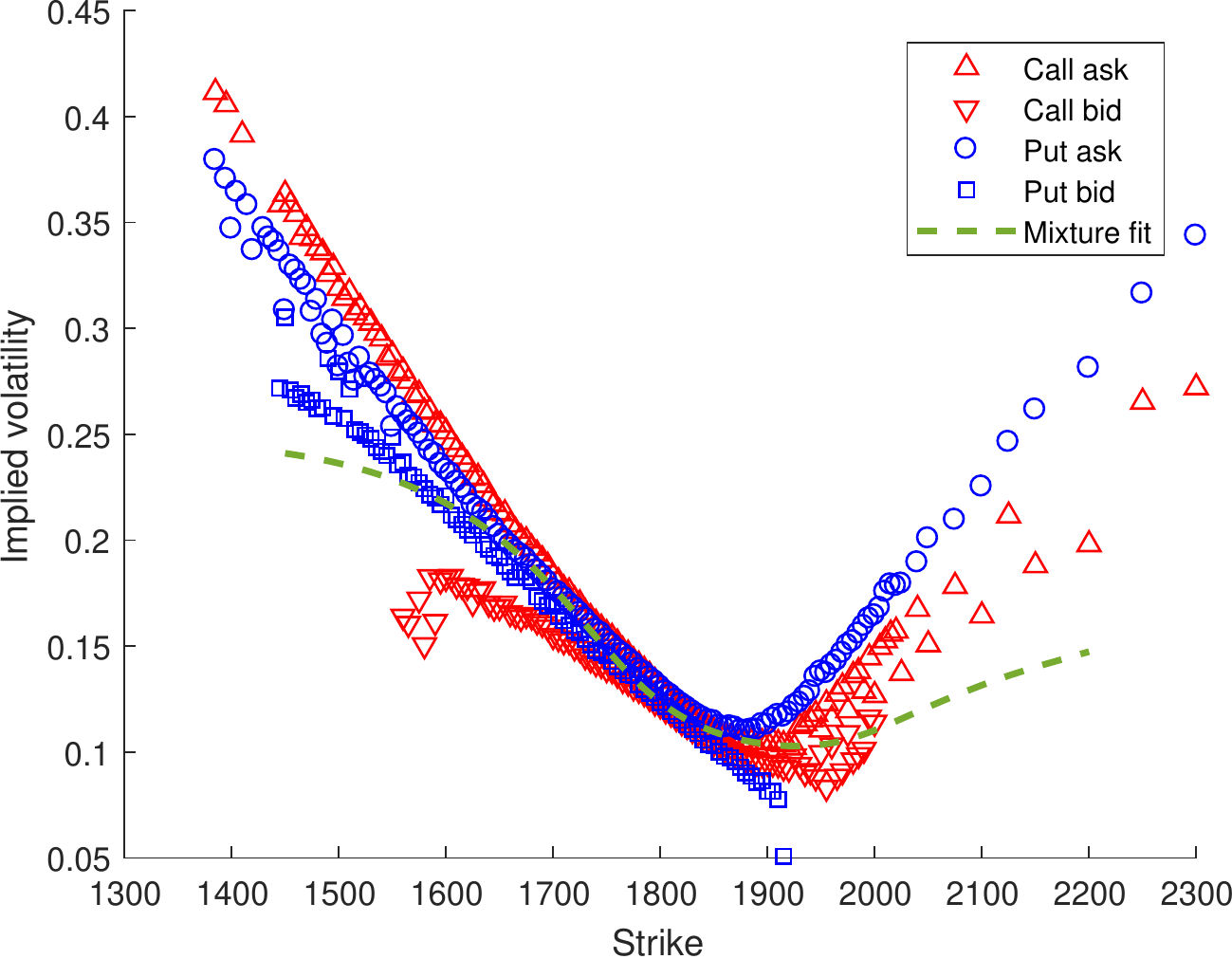}
	\caption{The volatility smile for 12 Feb showing the fit of a mixture of two log normal distributions, ${\cal M}_\Q$ calibrated to the market data.}
	\label{fig:mixturesmile}
\end{figure}

The results are shown in Table \ref{table:garchResults}.
\begin{table}[htbp]
\begin{centering}
\begin{tabular}{lrrr} \toprule
Date &  $\GARCH(1,1)$ run 1  &  $\GARCH(1,1)$ run 2  & Mixture  \\ \midrule
10 Feb & $<0.01\%$ & $0.19\%$ & $<0.01\%$  \\
11 Feb & $0.29\%$ & $<0.01\%$ & $<0.01\%$  \\
12 Feb & $0.33\%$ & $0.39\%$ & $<0.01\%$  \\
13 Feb & $<0.01\%$ & $<0.01\%$ & $<0.01\%$ \\
14 Feb & $0.26\%$ & $<0.30\%$ & $<0.01\%$ \\ \bottomrule
\end{tabular}
\caption{Minimum $p$ for which there exists a $\text{ES}_p$-arbitrage portfolio in exchange traded S\&P 500 options on the given data. For the $\GARCH(1,1)$ model, we performed two runs of the calculation in order to estimate the error produced by the use of Monte Carlo quadrature.}
\label{table:garchResults}
\end{centering}
\end{table}

Our results show that $\ES_p$-arbitrage opportunities can exist in real markets for low values of $p$ and with reasonable choices of $\P$-measure model.

\section{Conclusions}

We have shown that the ineffectiveness of expected shortfall as a means of
controlling the behaviour of tail-risk-seeking investors stems from its positive-homogeneity. Whether a positive-homogeneous risk constraint is effective or not depends upon whether or not the market contains a $\rho$-arbitrage. This is undesirable as risk constraints are typically set without reference to market conditions.

In the idealisation of the Black-Scholes model, expected-shortfall arbitrage opportunities exist for any confidence level $p$. We have shown how one can determine efficiently whether such an arbitrage exists and have used this to show that expected-shortfall arbitrage for small values of $p$ may exist in more realistic market models, including incomplete markets featuring transaction costs.

Positively-homogeneous risk measures have a natural attraction in that a regulator can use them to impose risk constraints that are proportionate to the size of investments and so, superficially, appear to treat large and small institutions ``fairly''. For example, in \cite{artznerEtAl}, the axiom of sub-additivity (which follows from convexity and positive homogeneity) is justified in part by the observation: ``If a firm were forced to meet a requirement of extra capital which did not satisfy this property [sub-additivity], the firm might be motivated to break up into two separately incorporated affiliates, a matter of concern for the regulator.'' From a post-crisis perspective, this argument seems to have lost
some of its persuasive force.

We would argue that a larger institution should be able to manage risk more effectively, making it reasonable for a regulator to insist that as institutions scale they should abide by
increasingly stringent risk constraints. We believe our results show that not only is it reasonable for a regulator to insist upon this, but it is essential if the regulator's constraints are intended to be effective irrespective of market conditions.


\bibliography{sshaped}

\begin{thebibliography}{}

\bibitem[Acerbi and Tasche, 2002]{acerbitasche}
Acerbi, C. and Tasche, D. (2002).
\newblock On the coherence of expected shortfall.
\newblock {\em Journal of Banking and Finance}, (26):1487--1503.

\bibitem[Alexander, 2004]{alexander2004normal}
Alexander, C. (2004).
\newblock Normal mixture diffusion with uncertain volatility: Modelling
  short-and long-term smile effects.
\newblock {\em Journal of Banking \& Finance}, 28(12):2957--2980.

\bibitem[Alexander and Baptista, 2002]{alexanderBaptista}
Alexander, G.~J. and Baptista, A.~M. (2002).
\newblock Economic implications of using a mean-var model for portfolio
  selection: A comparison with mean-variance analysis.
\newblock {\em Journal of Economic Dynamics and Control}, 26(7-8):1159--1193.

\bibitem[Armstrong, 2018]{armstrongMarkowitz}
Armstrong, J. (2018).
\newblock The {M}arkowitz category.
\newblock {\em SIAM Journal on Financial Mathematics}, 9(3):994--1016.

\bibitem[Armstrong and Brigo, 2018]{armstrongBrigoRiskMagazine}
Armstrong, J. and Brigo, D. (2018).
\newblock Rogue traders versus value-at-risk and expected shortfall.
\newblock {\em Risk Magazine}.

\bibitem[Armstrong and Brigo, 2019]{armstrongbrigoutilityjbf}
Armstrong, J. and Brigo, D. (2019).
\newblock Risk managing tail-risk seekers: {VaR} and expected shortfall vs
  {S}-shaped utility.
\newblock {\em Journal of Banking and Finance}.

\bibitem[Armstrong et~al., 2017]{udomsak}
Armstrong, J., Pennanen, T., and Rakwongwan, U. (2017).
\newblock Pricing index options by static hedging under finite liquidity.
\newblock {\em Available at ar{X}iv.org}.

\bibitem[Artzner et~al., 1999]{artznerEtAl}
Artzner, P., Delbaen, F., Eber, J.-M., and Heath, D. (1999).
\newblock Coherent measures of risk.
\newblock {\em Mathematical finance}, 9(3):203--228.

\bibitem[{Bloomberg L.P.}, 2018]{bloomberg}
{Bloomberg L.P.} (2018).
\newblock Retrieved from Bloomberg database.

\bibitem[Brigo and Mercurio, 2002]{brigomix}
Brigo, D. and Mercurio, F. (2002).
\newblock Displaced and mixture diffusions for analytically-tractable smile
  models.
\newblock In Geman, H., Madan, D., Pliska, S.~R., and Vorst, T., editors, {\em
  Mathematical Finance --- Bachelier Congress 2000: Selected Papers from the
  First World Congress of the Bachelier Finance Society, Paris, June 29--July
  1, 2000}, pages 151--174, Berlin, Heidelberg. Springer Berlin Heidelberg.

\bibitem[Brigo and Nordio, 2015]{brigonordio}
Brigo, D. and Nordio, C. (2015).
\newblock A random holding period approach for liquidity-inclusive risk
  management.
\newblock In Glau, K., Scherer, M., and Zagst, R., editors, {\em Innovations in
  Quantitative Risk Management}, pages 3--18, Cham. Springer International
  Publishing.

\bibitem[Bruder et~al., 2016]{ronc1}
Bruder, B., Kostyuchyk, N., and Roncalli, T. (2016).
\newblock Risk parity portfolios with skewness risk: An application to factor
  investing and alternative risk premia.
\newblock {\em Lyxor {A}sset {Management} technical report}.

\bibitem[F{\"o}llmer and Schied, 2002]{follmer1}
F{\"o}llmer, H. and Schied, A. (2002).
\newblock Convex measures of risk and trading constraints.
\newblock {\em Finance and stochastics}, 6(4):429--447.

\bibitem[Herdegen and Khan, 2020]{herdegen}
Herdegen, M. and Khan, N. (2020).
\newblock A dual characterisation of regulatory arbitrage for coherent risk
  measures.
\newblock {\em Available at ar{X}iv.org}.

\bibitem[Kahneman and Tversky, 1979]{kahnemanAndTversky}
Kahneman, D. and Tversky, A. (1979).
\newblock Prospect theory: An analysis of decision under risk.
\newblock {\em Econometrica: Journal of the econometric society}, pages
  263--291.

\bibitem[Lezmi et~al., 2018]{ronc2}
Lezmi, E., Malongo, H., Roncalli, T., and Sobotka, R. (2018).
\newblock Portfolio allocation with skewness risk: A practical guide.
\newblock {\em Amundi {A}sset {Management} technical report}.

\bibitem[Markowitz, 1952]{markowitz}
Markowitz, H. (1952).
\newblock Portfolio selection.
\newblock {\em The journal of finance}, 7(1):77--91.

\bibitem[Pennanen, 2011]{pennanen2011}
Pennanen, T. (2011).
\newblock Convex duality in stochastic optimization and mathematical finance.
\newblock {\em Mathematics of Operations Research}, 36(2):340--362.

\bibitem[Pennanen, 2012]{pennanen2012}
Pennanen, T. (2012).
\newblock Introduction to convex optimization in financial markets.
\newblock {\em Mathematical programming}, 134(1):157--186.

\bibitem[Rockafellar et~al., 2000]{rockafellarUrayasev}
Rockafellar, R.~T., Uryasev, S., et~al. (2000).
\newblock Optimization of conditional value-at-risk.
\newblock {\em Journal of risk}, 2:21--42.

\end{thebibliography}
\bibliographystyle{apalike}

\appendix

\section{Proofs}
\label{appendix:proofs}

\begin{proof}
[Proof of Theorem \ref{thm:rhoArbitrageAndUtility}]
Let $X$ and $Y$ be as in the statement of the theorem.	
	
Using the subadditivity of $\rho$, followed by its positive homogeneity, followed by the definition of $\rho$-arbitrage we find:
\begin{equation*}
\rho(Y+ \lambda X) \leq \rho(Y) + \rho( \lambda X) = \rho(Y) + \lambda \rho(  X) \leq \rho(Y).
\end{equation*}	
This proves \eqref{eq:rhoInequality}. The same argument applied to ${\cal P}$ proves \eqref{eq:priceInequality}.

For $\lambda>0$ we have
\begin{align}
E(u(\lambda X))&=E(u(\lambda X^+)) + E(u(-\lambda X^-)) \nonumber \\ 
&= \lambda^{a_1} E(u(X^+)) + \lambda^{a_2}(u(-X^-)) \nonumber \\
&=  \lambda^{a_1} ( E(u(X^+))
+ \lambda^{a_2-a_1} E(u(-X^-)) ) \nonumber \\
&\geq  \lambda^{a_1} ( E(u(X^+))
+ \lambda^{a_2-a_1} (C_2(a_2-1)-C_2 a_2 E(X^-)) ).
\label{eq:scalingBound}
\end{align}

The last bound arises by differentiating $u$ at $-1$ and using the concavity of $u$ on the left.

Define 
\[
\ell(x) = \begin{cases}
C_1 x^{a_1} - C_1 - C_2 & x \geq 1 \\
-C_2 (-x)^{a_2} & x \leq -1 \\
-C_2 & -1 \leq x \leq 1
\end{cases}
\]
so that $\ell(x)$ is continuous and increasing and satisfies the inequalities
\begin{equation}
u(x)-C_1-C_2 \leq \ell(x) \leq u(x).
\label{eq:ellInequalities}
\end{equation}
The function $\ell$ is differentiable except at $-1$ and $1$. At points where the derivative exists
\[
0 \leq \ell^\prime(x) \leq \max\{ C_1 a_1, C_2 a_2 \}=:\tilde{C}.
\]
Hence for all $y\geq 0$
\[
\ell(x-y)=\ell(x) - \int_{x-y}^x \ell^\prime(u) \, \ed u \geq \ell(x) - \tilde{C} y. 
\]
We deduce from \eqref{eq:ellInequalities} that for all $y\geq 0$
\[
u(x-y) \geq u(x) - \tilde{C} y - C_1 - C_2. 
\]
Hence
\begin{align*}
E(u(\lambda X + Y))
&\geq E(u(\lambda X - Y^-)) \\
&\geq E(u(\lambda X ) - \tilde{C} Y^-
- C_1 - C_2 ) \\
&\geq \lambda^{a_1} ( E(u(X^+))
+ \lambda^{a_2-a_1} (C_2(a_2-1)-C_2 a_2 E(|X^-|))  - \tilde{C} E(Y^-) - C_1 - C_2
\end{align*}
by \eqref{eq:scalingBound}.

There is a positive probability that $X>0$, so $E(u(X^+))>0$. Hence if either $E(|X^-|)<\infty$ or $C_2=0$ we find
\[
\lim_{\lambda \to \infty} E(u(\lambda X + Y))= \infty.
\]
This establishes \eqref{eq:goodForRogue}
and \eqref{eq:goodForRogue2}.

\end{proof}
\begin{proof}
[Proof of Theorem \ref{thm:rhoArbitrageCharacterisation}]
Using the notation of the previous proof, to see that the constraint ${\cal A}^{\rho, \alpha}$ is ineffective, simply take $Y=\alpha$.
Then by \eqref{eq:goodForRogue}, $Y+\lambda X$ will give a solution of arbitrarily high $\tilde{u}$ utility which lies in ${\cal A}^{\rho, \alpha}$ by \eqref{eq:rhoInequality} and which has a cost of less than ${\cal P}(\alpha)$ by \eqref{eq:priceInequality}.

Let us assume that ${\cal A}^{\rho, \alpha}$ is ineffective. This implies that for all $M$
we can find $X_M \in {\cal A}^{\rho, \alpha}$ with ${\cal P}(X_M) \leq -\alpha$ and $\E(X_M^+) \geq M$.
We see that $\rho(X_M + \alpha) \leq 0$ and ${\cal P}(X_M+\alpha) \leq 0$.
If we take $M=-\alpha+1$ then $E(X_M^+) \geq -\alpha + 1$ so $X_M^+$ is greater
than or equal to $-\alpha+1$ with positive probability, and hence so is $X_M$. Therefore
$X_M+\alpha$ is greater than $1$ with positive probability. We conclude that $X_M + \alpha$ is a $\rho$-arbitrage.
\end{proof}

\begin{proof}
[Proof of Theorem \ref{thm:badForRiskManager}]
Let $\alpha=\frac{\beta}{1+\beta}$, so that $0 < \alpha < 1$. Since $u_R$ is concave
\begin{align*}
u_R\left( \alpha \lambda X \right) &= u_R\left( \alpha (\lambda X + Y) + (1-\alpha) \left(-\frac{\alpha}{1-\alpha} Y \right) \right) \\
&\geq \alpha u_R( \lambda X + Y) + (1-\alpha) u_R\left(-\frac{\alpha}{1-\alpha} Y\right) \\
&= \alpha u_R( \lambda X + Y) + (1-\alpha) u_R\left(-\beta Y\right)
\end{align*}
Rearranging we find
\[
u_R( \lambda X + Y) \leq \frac{1}{\alpha} u_R\left( \alpha \lambda X \right) - \frac{1}{\beta} u_R(-\beta Y).
\]
So, by our assumption that $\E(u_R(-\beta Y))$ is finite, it suffices to prove that
\begin{equation}
\lim_{\lambda \to \infty} \E(u_R\left( \alpha \lambda X \right)) = -\infty.
\label{eq:utilityWithoutY}
\end{equation}

Since $X$ is not a true arbitrage, we may choose $\epsilon>0$ such that
$\P(X\leq-\epsilon)=p>0$.
\begin{align*}
\E(u_R(\lambda X))&=p \, \E(u_R(\lambda X) \mid X \leq -\epsilon) + (1-p) \, \E(u_R(\lambda X) \mid X > -\epsilon) \\
&\leq p \, u_R(-\lambda \, \epsilon) + (1-p) \, \E(u_R(\lambda X) \mid X > -\epsilon)
\end{align*}
since $u_R$ is an increasing function. Since $u_R$ is concave, it is bounded from above by a linear function, so $u(x)\leq a\, x + b$ for some $a, b \in \R$. Hence
\begin{align*}
\E(u_R(\lambda X))
&\leq p \, u_R(-\lambda \, \epsilon) + (1-p) \, \E(a \lambda X + b \mid X > -\epsilon) \\
&\leq p \, u_R(-\lambda \, \epsilon) + (1-p) (a\,\lambda \, \E(  X  \mid X > -\epsilon ) + b) \\
&= \lambda p \left[ \frac{u_R(-\lambda \, \epsilon)}{\lambda} + \frac{1-p}{p} \left(a \, \E(  X  \mid X > -\epsilon ) + \frac{b}{\lambda} \right) \right] \\
\end{align*}
By \eqref{eq:riskAverseUtility} the term in square brackets will be negative for sufficiently large $\lambda$. Equation \eqref{eq:utilityWithoutY} and hence equation \eqref{eq:badForRiskManager} follow.
\end{proof}

\begin{proof}
[Proof of Theorem \ref{thm:utilityConstraintsEffective}]

By making an affine transformation of $u_R$ if necessary, we may
assume $u_R(x)\leq x$ for all $x \in \R$ and $u_R(x)\leq 0$ for all $x \leq 0$.

Suppose that ${\cal A}$ is ineffective.
Let $X_1, \dots, X_N$ be a basis for ${\cal P}^{-1}(\R)$.
Write ${\bm X}=(X_1,\dots X_N)$. Define the set
\[
{\cal C}:=\{ {\bm{\alpha}} \in \R^N \mid \E( u( {\bm \alpha} \cdot {\bm X} ) ) \geq \min \{0,L\}
\text{ and }
{\cal P}({\bm \alpha} \cdot {\bm X}) \leq 0 \}.
\]
The bound 
\[
E(u_R({\bm \alpha} \cdot {\bm X})) \leq E(\tilde{u}({\bm \alpha} \cdot {\bm X} )) \leq {\bm \alpha}^+ E({\bm X}^+)
\]
shows that since ${\cal A}$ is ineffective, ${\cal C}$ will be unbounded. Since ${\cal C}$
is also convex and contains the origin, it must contain some ray starting at the origin. 

Hence we may
find ${\bm \alpha}^*\neq0$ with $\lambda \, {\bm \alpha}^* \in {\cal C}$ for all $\lambda \in {\R}_{\geq 0}$.

Since
\[
\lambda {\cal P}( {\bm \alpha}^* \cdot {\bm X} ) \leq 0 \quad \forall \lambda \in {\R}_{\geq 0}
\]
we have that ${\cal P}( {\bm \alpha}^* \cdot {\bm X} )\leq 0$.

Suppose for a contradiction that there is a positive probability that ${\bm \alpha}^* \cdot {\bm X}$ is negative.
Then we may find $\delta, \epsilon>0$ such that $\P( {\bm \alpha}^* \cdot {\bm X} \leq -\delta ) \geq \epsilon$.
Hence using our bound $u_R(x)\leq x$ for all $x$ and $u_R(x)\leq 0$ for all $x\leq 0$  we have that for all $\lambda$
\begin{align*}
L \leq \E(u_R(\lambda {\bm \alpha}^* \cdot {\bm X})) &\leq \epsilon \, u(- \lambda \delta ) + \lambda \E( ({\bm \alpha}^* \cdot {\bm X})^+ ) \\
&= u_R(-\lambda \delta) \left[ \epsilon + \frac{\lambda \delta}{u_R(\lambda \delta)} \frac{\E( ({\bm \alpha}^* \cdot \bm X)^+)}{\delta} \right].
\end{align*}
Using \eqref{eq:riskAverseUtility} and the fact that $u_R$ is concave and increasing, we find that the right hand side tends to $-\infty$ as $\lambda \to \infty$, yielding
the desired contradiction.

Therefore ${\bm \alpha}^* \cdot {\bm X}$ is almost surely non-negative. Since ${\bm \alpha}\neq 0$ and the $X_i$ are
assumed to be linearly independent, we deduce that there is a positive probability that ${\bm \alpha}^* \cdot {\bm X}$
is positive. Hence ${\bm \alpha}^* \cdot {\bm X}$ is a true arbitrage.
\end{proof}

\begin{proof}
[Proof Theorem \ref{thm:completeMarkets}]
As described in \cite{armstrongbrigoutilityjbf}, since the market is atomless we can
find a uniformly distributed random variable $U$ such that the Radon-Nikodym derivative
\[
\frac{\ed \Q}{\ed \P}=q(U)
\]
for some positive decreasing function $q(u)$ of integral $1$ over $[0,1]$. (In the event that $\frac{\ed \Q}{\ed \P}$ has a continuous distribution we may simply take $U$ to the the image of $\frac{\ed \Q}{\ed \P}$ under its own cumulative distribution function, the atomless assumption allows us to find $U$ in the general case).

It follows from the theory of rearrangements
described in \cite{armstrongbrigoutilityjbf} that if an $\ES_p$-arbitrage $X$ exists, then there exists an $\ES_p$ arbitrage of the form $X=f(U)$ for some increasing function $f:[0,1]\to \R$.

If $X$ is an $\ES_p$ arbitrage then so is $X+\ES_p(X)$. To see this, first note that $\ES_p(X) \leq 0$ and so
\[
{\cal P}(X + \ES_p(X)) \leq {\cal P}(X) + {\cal P}(\ES_p(X)) = {\cal P}(X) + |\ES_p(X)|{\cal P}(-1) \leq {\cal P}(X) \leq 0.
\]
Second we have by the axioms of a coherent risk measure that
\[
\ES_p(X + \ES_p(X)) = \ES_p(X) + \ES_p(X)\ES_p(1) = 0.
\]
Next note that $\ES_p(Y) \geq (E(-Y))$ which shows that a non-positive random variable $Y$ can only have an expected shortfall of zero if it is constant and equal to zero. It follows that $X + \ES_p(X)$ is either constant or has a positive probability of being positive. But $X + \ES_p(X)$ is constant if and only if $X$ is constant, and $X$ cannot be constant since it is an $\ES_p$ arbitrage. Therefore $X + \ES_p(X)$ has a positive probability of being positive. We have now shown that 
$X + \ES_p(X)$ is an $\ES_p$-arbitrage as claimed.

It follows that we may restrict attention to looking for $\ES_p$ arbitrage of the form $X=f(U)$ with $f$ increasing and $\ES_p(f(U))=0$.

Given $\alpha\leq 0$ and $\beta \geq 0$, let $A_{p,\alpha,\beta}$ be the set of increasing functions $f:[0,1]\to \R$ which satisfy
$f([0,p])\subseteq[\alpha,\beta]$, $f([p,1])\subseteq[\beta, \infty)$ and which have $\ES_p(f(U))=0$. We will say that $f\in A_{p,\alpha,\beta}$
is a $\rho$-arbitrage if $f(U)$ is a $\rho$-arbitrage.
We have shown above that the market
contains an $\ES_p$-arbitrage if and only if some $A_{p,\alpha,\beta}$ contains
an $\ES_p$-arbitrage.

Define
an increasing function $g \in A_{p,\alpha,\beta}$ by
\[
g(u)=\begin{cases}
\alpha & \text{when } 0 \leq u \leq \tilde{p} \\
\beta & \text{when } \tilde{p} < u \leq 1 \\
\end{cases}
\]
where $\tilde{p} \in [0,p]$ is chosen to ensure that ${\text{ES}}_p(g(U)) = 0$. This requires
\[
\alpha \frac{\tilde{p}}{p} + \beta \frac{p-\tilde{p}}{p} = 0
\]
and hence
\begin{equation}
\tilde{p} = \frac{\beta p}{\beta - \alpha}.
\label{eq:pTilde}
\end{equation}
Since $f(u)\geq \beta$ for $u \in [p,1]$ we compute that for $f \in A_{p,\alpha,\beta}$
\begin{align}
\E_\Q( f(U) ) &= \int_0^p q(u) f(u) \ed u + \int_p^1 q(u) f(u) \ed u \nonumber \\
&\geq  \int_0^p q(u) f(u) \ed u + \int_p^1 q(u) \beta \ed u \nonumber \\
&=  \int_0^p q(u) f(u) \ed u + \int_p^1 q(u) g(u) \ed u \label{eq:priceEq1}
\end{align}
We may rewrite the first term on the right hand side as follows
\begin{align*}
\int_0^p q(u) f(u) \ed u
 &= \int_0^p q(u) g(u) \ed u + \int_0^{p} q(u)(f(u)-g(u)) \ed u \\
 &= \int_0^p q(u) g(u) \ed u \\
 &\quad +\int_{0}^{\tilde{p}} q(u)(f(u)-g(u)) \ed u + \int_{\tilde p}^p q(u)(f(u)-g(u)) \ed u
\end{align*}
But $f(u)\geq g(u)$ on $[0,\tilde{p}]$, $f(u)\leq g(u)$ on $(\tilde{p},p]$. Moreover $q$ is also a decreasing function. We deduce that
\begin{align*}
\int_0^p q(u) f(u) \ed u
&\geq \int_0^p q(u) g(u) \ed u \\
&\quad +\int_{0}^{\tilde{p}} q(\tilde{p})(f(u)-g(u)) \ed u + \int_{\tilde p}^p q(\tilde{p})(f(u)-g(u)) \ed u \\
&= \int_0^p q(u) g(u) \ed u  +\int_{0}^p q(\tilde{p})(f(u)-g(u)) \ed u \\
&= \int_0^p q(u) g(u) \ed u  + p q(\tilde{p}) (\ES_p(f) - \ES_p(g))  \\
&= \int_0^p q(u) g(u) \ed u. \\
\end{align*}
Using this we may obtain from
equation \eqref{eq:priceEq1} that
\[
\E_\Q( f(U) ) \geq \int_0^1 q(u) g(u) = \E_\Q(g(U)).
\]
Hence by definition of ${\cal P}$, we have shown that for $f \in A_{p,\alpha,\beta}$
\begin{equation}
{\cal P}( f(U) ) \geq {\cal P}(g(U)).
\label{eq:priceFGreaterThanPriceG}
\end{equation}
We deduce that if $A_{p,\alpha,\beta}$ contains any $\ES_p$-arbitrage then $g$ must be an $\ES_p$-arbitrage. This will be the case so long as $\beta>0$ and ${\cal P}(g(U))\leq 0$.

Let us now make the dependence of $g$ on $p$, $\alpha$ and $\beta$ explicit and write
\[
g_{p,\alpha,\beta}(u)=\begin{cases}
\alpha & \text{when } 0 \leq u \leq \tilde{p} \\
\beta & \text{when } \tilde{p} < u \leq 1 \\
\end{cases}
\]
We have
\begin{equation}
{\cal P}(g_{p,\alpha,\beta}(U))
= 
\int_0^{\tilde{p}} \alpha q(u) \ed u + \int_{\tilde{p}}^1 \beta q(u) \ed u
\label{eq:priceG}
\end{equation}
where $\tilde{p}$ is given in \eqref{eq:pTilde}. We note that
\[
\alpha = \left(1 - \frac{p}{\tilde p} \right) \beta.
\]
So we may rewrite \eqref{eq:priceG} as 
\begin{align}
{\cal P}(g_{p,\alpha,\beta}(U))
&= 
\int_0^{\tilde{p}} \left(1 - \frac{p}{\tilde{p}} \right) \beta q(u) \ed u + \int_{\tilde{p}}^1 \beta q(u) \ed u \nonumber \\
&= 
-\int_0^{\tilde{p}} \frac{p}{\tilde{p}} \beta q(u) \ed u + \int_0^1 \beta q(u) \ed u \nonumber \\
&= 
\beta\left(1 -\frac{p}{\tilde{p}} \int_0^{\tilde{p}}  q(u) \ed u \right)
\label{eq:priceGExpanded}
\end{align}
Viewed as a function of $\beta$, ${\cal P}(g_{p,\alpha,\beta}(U))$ must be decreasing by
\eqref{eq:priceFGreaterThanPriceG}. We also note that
\begin{equation}
\lim_{\alpha\to-\infty} {\cal P}(g_{p,\alpha,\beta}(U)) = \lim_{\tilde{p}\to 0+} {\cal P}(g_{p,\alpha,\beta}(U)) = \beta(1 - p \sup_{u\in(0,1)} (q(u)) )
\label{eq:gPriceLimit}
\end{equation}
where we have used \eqref{eq:priceGExpanded}, the fact $q$ is decreasing and the fundamental theorem of calculus. We deduce that if
\[
\sup_{u\in(0,1)} (q(u)) > \frac{1}{p}
\]
then $g_{p,\alpha,\beta}(U)$ will be an $\ES_p$ arbitrage for sufficiently small $\alpha$.
Suppose we have
\[
\sup_{u\in(0,1)} (q(u)) < \frac{1}{p}
\]
then $g_{p,\alpha,\beta}(U)$ will not be an $\ES_p$ arbitrage for any value of $\alpha$.
If we have equality
\[
\sup_{u\in(0,1)} (q(u)) = \frac{1}{p}
\]
then the limit in \eqref{eq:gPriceLimit} will be achieved for finite $\alpha$ if and only if $q$ attains its supremum on $(0,1]$. Hence $g_{p,\alpha,\beta}$ will be a ${\text ES}_p$-arbitrage for sufficiently small $\alpha$ if and only if this supremum is attained. The result follows.
\end{proof}

\end{document}